\newcommand{\email}[1]{\href{mailto:#1}{\nolinkurl{#1}}}
\newcommand{\bs}{\ensuremath{\bm{s}}}
\newcommand{\bpi}{\ensuremath{\bm{\pi}}}
\newcommand{\bt}{\ensuremath{\bm{t}}}
\newcommand{\N}{\mathbb{N}}
\newtheorem{lemma}{Lemma}
 \newcounter{Aequ}
\newenvironment{AEquation}
   {\stepcounter{Aequ}%
     \equation}
   {\endequation}
\newcommand*{\addFileDependency}[1]{
  \typeout{(#1)}
  \@addtofilelist{#1}
  \IfFileExists{#1}{}{\typeout{No file #1.}}
}
\newcommand*{\myexternaldocument}[1]{
    \externaldocument{#1}
    \addFileDependency{#1.tex}
    \addFileDependency{#1.aux}
}
\begin{document}
{
\small
\title{PENDANTSS: Supplementary Material}

\maketitle
}
\section{Proof of Theorem 1 for Algorithm 1}

We first provide a useful    majorant metric matrix property.

\begin{lemma}
There exists  $(\underline{\lambda}, \overline{\lambda}) \in ]0,+\infty[^2$ such that for every $k\in \N$, and for every  $i\in\{1,\ldots, \mathcal{I} \}$, 
\begin{AEquation}
\begin{cases}
    \underline{\lambda}\mathrm{\mathbf{Id}}_N \preceq \bm{A}_{1,\rho_{k,i}}(\bm{s}_{k},\bm{\pi}_{k}) \preceq \overline{\lambda}\mathrm{\mathbf{Id}}_N,\\
   \underline{\lambda} \leq   \Lambda_2(\bm{s}_{k}) \leq \overline{\lambda}.
    \end{cases}
\end{AEquation}
\label{lem:1}
\end{lemma}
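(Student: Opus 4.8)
The plan is to treat the two lines of the claimed inequality separately, and within each to obtain the lower and the upper bound from distinct mechanisms: the lower bound from a strictly positive component that is structurally present at every iteration, and the upper bound from a compactness argument on the iterates combined with the continuity of the metric-valued maps. Since the only dependence on the finite inner index $i\in\{1,\ldots,\mathcal{I}\}$ enters through the parameter $\rho_{k,i}$, uniformity over $i$ will be automatic once that parameter is controlled.

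First I would recall from the main paper the explicit construction of $\bm{A}_{1,\rho_{k,i}}(\bm{s}_k,\bm{\pi}_k)$ and of $\Lambda_2(\bm{s}_k)$. Both are majorant metrics built from the smooth parts of the two blocks of variables: $\bm{A}_{1,\rho_{k,i}}$ splits as a data-fidelity curvature term depending on $(\bm{s}_k,\bm{\pi}_k)$ plus a proximal/penalty term scaled by $\rho_{k,i}$, while $\Lambda_2$ is a Lipschitz-type scalar depending on $\bm{s}_k$ alone. For the lower bounds I would exhibit, in each construction, a strictly positive contribution that is independent of the iteration index. For $\bm{A}_1$ this is the proximal part $\rho_{k,i}\mathbf{Id}_N$ together with the step-size rule guaranteeing $\rho_{k,i}\geq\rho_->0$, whence $\bm{A}_{1,\rho_{k,i}}(\bm{s}_k,\bm{\pi}_k)\succeq\rho_-\mathbf{Id}_N$; for $\Lambda_2(\bm{s}_k)$ the same role is played by the strictly positive additive term present in its definition. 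I would then set $\underline{\lambda}$ to be the minimum of these positive constants.

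For the upper bounds, the key point is that Algorithm~1 is a descent method, so the cost values are nonincreasing; combined with coercivity of the cost this confines $(\bm{s}_k,\bm{\pi}_k)_{k\in\N}$ to a fixed sublevel set $\mathcal{L}$, which is compact. Since $(\bm{s},\bm{\pi})\mapsto\bm{A}_{1,\rho}(\bm{s},\bm{\pi})$ and $\bm{s}\mapsto\Lambda_2(\bm{s})$ are continuous, and since the step-size satisfies $\rho_{k,i}\leq\rho_+$, the largest eigenvalue of $\bm{A}_{1,\rho_{k,i}}(\bm{s}_k,\bm{\pi}_k)$ and the value of $\Lambda_2(\bm{s}_k)$ are uniformly bounded over $\mathcal{L}\times[\rho_-,\rho_+]$; I would take $\overline{\lambda}$ to be this finite supremum.

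The main obstacle is precisely the two facts underpinning the upper bound: the compactness of the iterate sequence and the two-sided boundedness $\rho_{k,i}\in[\rho_-,\rho_+]$ of the step-sizes. The former must be extracted from the descent inequality of Algorithm~1 together with coercivity of the objective, and the latter from the finite termination of the backtracking rule. Once these are secured, the lemma follows immediately by taking $\underline{\lambda}$ and $\overline{\lambda}$ as the extremal constants constructed above.
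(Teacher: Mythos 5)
Your route diverges substantially from the paper's proof, which is a one-line appeal to structural results: Lemma~\ref{lem:1} is stated there to be a direct consequence of [14, Prop.~2] and [13, Prop.~1], i.e., the two-sided bounds are already established in those references from the explicit form of the majorant maps $\bm{A}_{1,\rho}$ and $\Lambda_2$ on the bounded constraint sets, with uniformity in $i$ automatic because $i$ ranges over the finite trust-region schedule $\{1,\dots,\mathcal{I}\}$. A different route would be fine if it were sound, but yours has genuine gaps.

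First, a circularity risk: you derive your upper bound from ``Algorithm~1 is a descent method, so the cost values are nonincreasing.'' In this paper the descent properties (Lemma~2) are proved \emph{using} Lemma~1 --- the metric bounds are exactly what turn the decrease measured in $\|\cdot\|_{\bm{A}_{1,\rho}(\bs_k,\bpi_k)}$ into a Euclidean decrease with constants $\mu_1,\mu_2$. An unquantified monotonicity can indeed be extracted from the majorization inequalities (14), (17)--(19) alone, without Lemma~1, but you never make that distinction, and as written your argument uses a consequence of the lemma to prove the lemma. Second, the coercivity you invoke is unjustified and most likely false: the objective couples $\bs$ and $\bpi$ through a convolution, and such blind-deconvolution-type costs with a smoothed $\ell_1/\ell_2$ penalty are not coercive (scaling ambiguities leave the data-fidelity term unchanged); in the paper the iterates are confined not by sublevel sets but by the constraint sets $C_1$ and $C_2$ themselves, which enter $\Omega$ as indicator terms and are precisely what the cited propositions exploit. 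Third, your lower-bound mechanism misreads the role of $\rho_{k,i}$: it is the radius of the trust region $\bar{\mathcal{B}}_{q,\rho}$ over which the majorization (14) is valid, not a proximal weight, so there is no additive term $\rho_{k,i}\mathbf{Id}_N$ inside $\bm{A}_{1,\rho_{k,i}}$, and the inequality $\bm{A}_{1,\rho_{k,i}}(\bs_k,\bpi_k)\succeq \rho_{-}\mathbf{Id}_N$ has no basis; the positive lower bound in [14, Prop.~2] comes from the curvature and smoothing terms built into the majorant. So your skeleton (a positive term for the lower bound, boundedness plus continuity for the upper bound) is reasonable in spirit, but each of the three load-bearing facts you rely on is either circular, unproven, or contradicted by the actual construction.
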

\begin{proof}
Direct consequence of [14, Prop. 2] and [13, Prop. 1]. 
\end{proof}
We then show that Algorithm~1 satisfies two essential descent properties, that are key for the convergence analysis. 

\begin{lemma}
\label{prop:descent}
There exists $(\mu_1,\mu_2) \in ]0,+\infty[^2$ such that, for every $k\in \mathbb{N}$, the following descent properties hold:
\begin{AEquation}
\Omega(\bs_{k+1}, \bpi_{k}) \leq\Omega(\bs_{k}, \bpi_k) - \frac{\mu_1}{2}||\bs_{k+1}- \bs_{k}||^2,
\label{eq:descent_s}
\end{AEquation}
\begin{AEquation}
	\Omega(\bs_{k\!+\!1}, \bpi_{k\!+\!1})\! \leq\! \Omega(\bs_{k+1}, \bpi_{k}) - \frac{\mu_2}{2}|| \bpi_{k\!+\!1} -  \bpi_{k}||^2.
\label{eq:descent_pi}
\end{AEquation}
\end{lemma}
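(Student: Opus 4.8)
The plan is to establish the two inequalities independently, each through a quadratic-majorant (majorize--minimize) descent argument, feeding in the uniform spectral bounds of Lemma~\ref{lem:1} only at the end to convert metric-weighted norms into Euclidean ones with constants that do not depend on $k$. For the signal descent \eqref{eq:descent_s}, I would start from the fact that in Algorithm~1 the iterate $\bs_{k+1}$ is obtained from $\bs_{k}$ by the inner variable-metric recursion indexed by $i\in\{1,\dots,\mathcal{I}\}$: at each inner step the smooth part of $\Omega(\cdot,\bpi_{k})$ is linearized and majorized by the quadratic form attached to $\bm{A}_{1,\rho_{k,i}}(\bs_{k},\bpi_{k})$, while the nonsmooth part is handled exactly through its proximity operator. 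Because [14, Prop.~2] certifies that this metric yields a genuine tangent majorant of $\Omega(\cdot,\bpi_{k})$ at the current inner iterate, the elementary majorize--minimize inequality applies at every step and produces a decrease of $\Omega(\cdot,\bpi_{k})$ bounded below by $\tfrac12\|\cdot\|_{\bm{A}_{1,\rho_{k,i}}}^2$ of the corresponding increment.

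Summing these per-step decreases over $i=1,\dots,\mathcal{I}$ telescopes the intermediate values of $\Omega(\cdot,\bpi_{k})$, leaving $\Omega(\bs_{k},\bpi_{k})-\Omega(\bs_{k+1},\bpi_{k})$ on one side and the accumulated metric-weighted squared increments on the other. The lower bound $\underline{\lambda}\,\mathrm{\mathbf{Id}}_N\preceq\bm{A}_{1,\rho_{k,i}}$ of Lemma~\ref{lem:1} then replaces each metric norm by $\underline{\lambda}$ times the Euclidean norm, and a triangle/Cauchy--Schwarz step recombines the $\mathcal{I}$ increments into $\|\bs_{k+1}-\bs_{k}\|^2$. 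This yields \eqref{eq:descent_s} with, for instance, $\mu_1=\underline{\lambda}/\mathcal{I}$ (or simply $\mu_1=\underline{\lambda}$ if the inner steps act on disjoint coordinate blocks, in which case the squared increments add up exactly).

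For the $\bpi$-descent \eqref{eq:descent_pi}, the $\bpi$-block is updated by a single forward--backward step on $\Omega(\bs_{k+1},\cdot)$ whose step size is governed by the scalar $\Lambda_2$ evaluated at the current signal iterate. Since [13, Prop.~1] guarantees that $\Lambda_2$ dominates the Lipschitz modulus of the relevant gradient, the classical forward--backward descent lemma gives $\Omega(\bs_{k+1},\bpi_{k+1})\le\Omega(\bs_{k+1},\bpi_{k})-\tfrac{\Lambda_2}{2}\|\bpi_{k+1}-\bpi_{k}\|^2$, and the lower bound $\Lambda_2\ge\underline{\lambda}$ of Lemma~\ref{lem:1} delivers \eqref{eq:descent_pi} with $\mu_2$ proportional to $\underline{\lambda}$.

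I expect the main obstacle to lie in the inner-loop structure of the signal update: one must verify that the majorization property of [14, Prop.~2] holds at every intermediate inner iterate, not only at $\bs_{k}$, so that the tangent-majorant inequality can legitimately be chained step by step, and then recombine the per-step increments into the single term $\|\bs_{k+1}-\bs_{k}\|^2$ without losing uniformity of the constant in $k$. It is precisely here that Lemma~\ref{lem:1} is indispensable: the $k$-independent bounds $\underline{\lambda},\overline{\lambda}$ prevent the descent constants $\mu_1,\mu_2$ from degenerating along the iterations, so that a single pair $(\mu_1,\mu_2)\in\,]0,+\infty[^2$ works for all $k$.
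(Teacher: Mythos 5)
Your treatment of the $\bpi$-block is essentially the paper's argument: a single majorize--minimize/forward--backward inequality for $\Omega(\bs_{k+1},\cdot)$ with curvature $\Lambda_2(\bs_{k+1})$, followed by the lower bound $\Lambda_2(\bs_{k+1})\ge\underline{\lambda}$ from Lemma~\ref{lem:1}. But your proof of \eqref{eq:descent_s} rests on a misreading of Algorithm~1, and this is a genuine gap. The index $i\in\{1,\dots,\mathcal{I}\}$ does not enumerate inner iterations of the signal update: it enumerates candidate trust-region radii $\rho_{k,i}$, i.e., the balls $\bar{\mathcal{B}}_{q,\rho}$ on which the matrix $\bm{A}_{1,\rho_{k,i}}(\bs_k,\bpi_k)$ is a valid majorant metric (this is the trust-region construction underlying Lemma~\ref{lem:1}). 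The update of $\bs$ is a \emph{single} variable-metric proximal step from $\bs_k$; the index $i$ only serves to select, a posteriori, a ball $\bar{\mathcal{B}}_{q,\rho}\cap C_1$ containing $\bs_{k+1}$, so that the majoration (14) can legitimately be invoked between $\bs_k$ and $\bs_{k+1}$. Consequently there is nothing to telescope, no chaining of majorants at intermediate iterates (your flagged ``main obstacle'' is an artifact of the misreading), no Cauchy--Schwarz recombination, and no factor $1/\mathcal{I}$ in $\mu_1$. The paper's proof of \eqref{eq:descent_s} is one step: sum the majoration (14) with the prox-step inequality (17), identify $f(\bs_k,\bpi_k)=\Omega(\bs_k,\bpi_k)$ (the indicator parts of $\Omega$ vanish since $\bs_k\in C_1$ and $\bpi_k\in C_2$), and conclude with Lemma~\ref{lem:1}.

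A second, smaller omission: both blocks of Algorithm~1 carry step sizes $\gamma_{s,k}$ and $\gamma_{\pi,k}$, and the descent factors come out as $(\gamma_{s,k}^{-1}-\tfrac12)$ and $(\gamma_{\pi,k}^{-1}-\tfrac12)\Lambda_2(\bs_{k+1})$, leading for instance to $\mu_1=\underline{\lambda}\overline{\gamma}/(2-\overline{\gamma})$ in the paper; your ``elementary MM inequality with constant $\tfrac12$'' implicitly assumes unit step size. Uniformity of $(\mu_1,\mu_2)$ in $k$ therefore requires not only the spectral bounds of Lemma~\ref{lem:1} (which you correctly identify as essential) but also the range assumption on the step sizes: without invoking it, the factors $(\gamma_{s,k}^{-1}-\tfrac12)$ and $(\gamma_{\pi,k}^{-1}-\tfrac12)$ could approach $0$ along the iterations and no single positive pair $(\mu_1,\mu_2)$ would exist.
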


\begin{proof}
Let $k \in \mathbb{N}$. We remind that the objective function $\Omega$ is defined in~(12), with $g$ specified in (11). By construction, $\bs_{k+1}\in \bar{\mathcal{B}}_{q,\rho} \cap C_1$ for some $i\in\{1,\ldots, \mathcal{I} \}$. Summing the majoration~(14) and the first inequality in~(17) yields:
\begin{equation*}
\Omega(\bs_{k+1},\bpi_k)\leq f(\bs_k, \bpi_k) - (\gamma_{s,k}^{-1} - \tfrac{1}{2})\| \bs_k - \bs_{k+1} \|^2_{\bm{A}_{1,\rho}(\bs_k,\bpi_k)}.
\end{equation*}
We notice that~$f(\bs_k, \bpi_k)= \Omega(\bs_k, \bpi_k)$ since $\bs_k\in C_1$ and~$\bpi_k\in C_2$.  Using Lemma~1 and the range assumption on $\gamma_{s,k}$ allows to show \eqref{eq:descent_s} for~$\mu_1 = {\underline{\lambda}\overline{\gamma}}/({2-\overline{\gamma}})$.
Again by construction, $\bpi_{k+1}\in C_2$. Summing~(18) and~(19) leads to:
\begin{multline*}
\Omega(\bs_{k+1}, \bpi_{k+1}) \leq f(\bs_{k+1},\bpi_k) -\\ (\gamma_{\pi,k}^{-1} - \tfrac{1}{2}) \Lambda_2(\bs_{k+1}) \|\bm{\pi}_{k +1}  - \bm{\pi}_{k }\|^2.
\end{multline*}
Here again, we use~$f(\bs_{k+1}, \bpi_k)= \Omega(\bs_{k+1}, \bpi_k)$ as $\bs_{k+1}\in C_1$ and~$\bpi_k\in C_2$. The descent property \eqref{eq:descent_pi} is obtained by using Lemma~\ref{lem:1}, the range constraint on $\gamma_{\pi,k}$, and setting~$\mu_2 = {\underline{\lambda}\bar{\gamma}}{(2-\bar{\gamma})}$.
\end{proof}

The rest of the proof of Theorem 1 is obtained by following the same lines than the one of~[16, Theorem 3.1], \textcolor{black}{leveraging the Lipschitz smoothness of $f$ on the domain $C_1 \times C_2$ of $\Omega$, and the Kurdyka-\L{}ojasiewicz inequality satisfied by $\Omega$.}

\section{Additional results}
Figures  \ref{fig:estimation} and  \ref{fig:estimation2} provide additional insights into PENDANTSS restoration. Dataset A in Figure  \ref{fig:estimation}-(a) presents sparse and well-isolated peaks. Accurate peak restoration is secured. Peak shapes are well recovered (left-hand zoom), and the estimated trend matches well the actual baseline. As a consequence, peak features that are computed  with respect to the trend (height, area) are likely to be well-estimated with PENDANTSS.
The less sparse Dataset B in Figure  \ref{fig:estimation}-(b) shows that the separation and the height of close peaks are accurately matched. Some overshoot in trend estimation can  be noticed. It is however not likely to drastically affect relative peak height or area computations.

Retrieved spikes are exposed in Figure \ref{fig:estimation2}. For Dataset A, well-separated spikes are accurately recovered using PENDANTSS. Estimated amplitudes and locations are almost indistinguishable from the original ones. This is exemplified for the less sparse Dataset B in Figure \ref{fig:estimation2}-(b).  Isolated peaks are well-estimated. However, some spikes (for instance around  index $175$) for  Dataset B in Figure \ref{fig:estimation2}-(b) remain unelucidated. Three contiguous spikes are estimated, instead of two. Such an ambiguous solution is typical to source separation problems.

\begin{figure}[htb]
    \centering
    \subfloat[Dataset A - reconstruction and trend.\label{fig:dataA_trend}]{
  \includegraphics[width=0.67\linewidth, trim= 0 0 0 20]{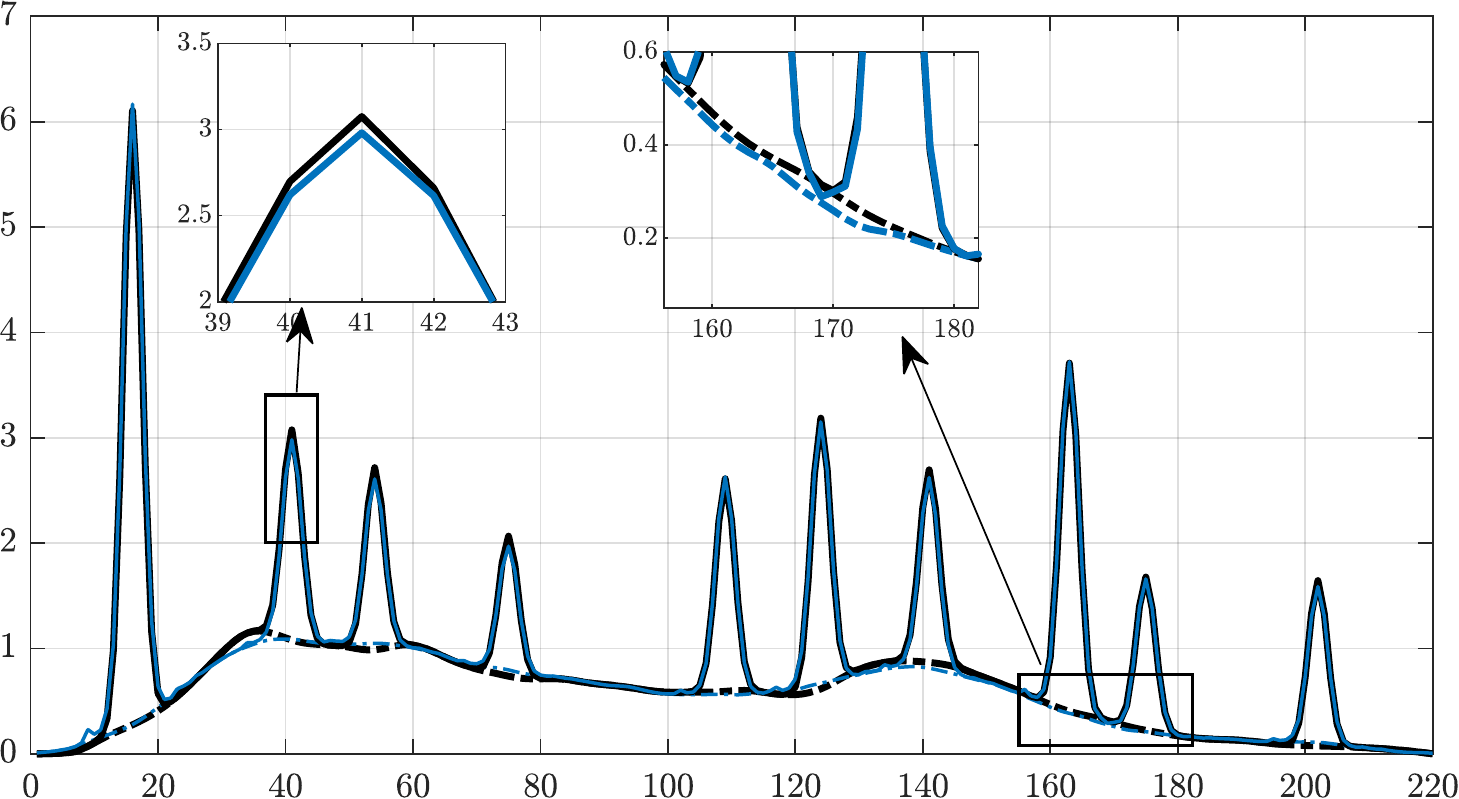}}
  \vspace{-.2cm}
\hfill 
    \subfloat[Dataset B - reconstruction and trend.\label{fig:dataB_trend}]{
  \includegraphics[width=0.67\linewidth]{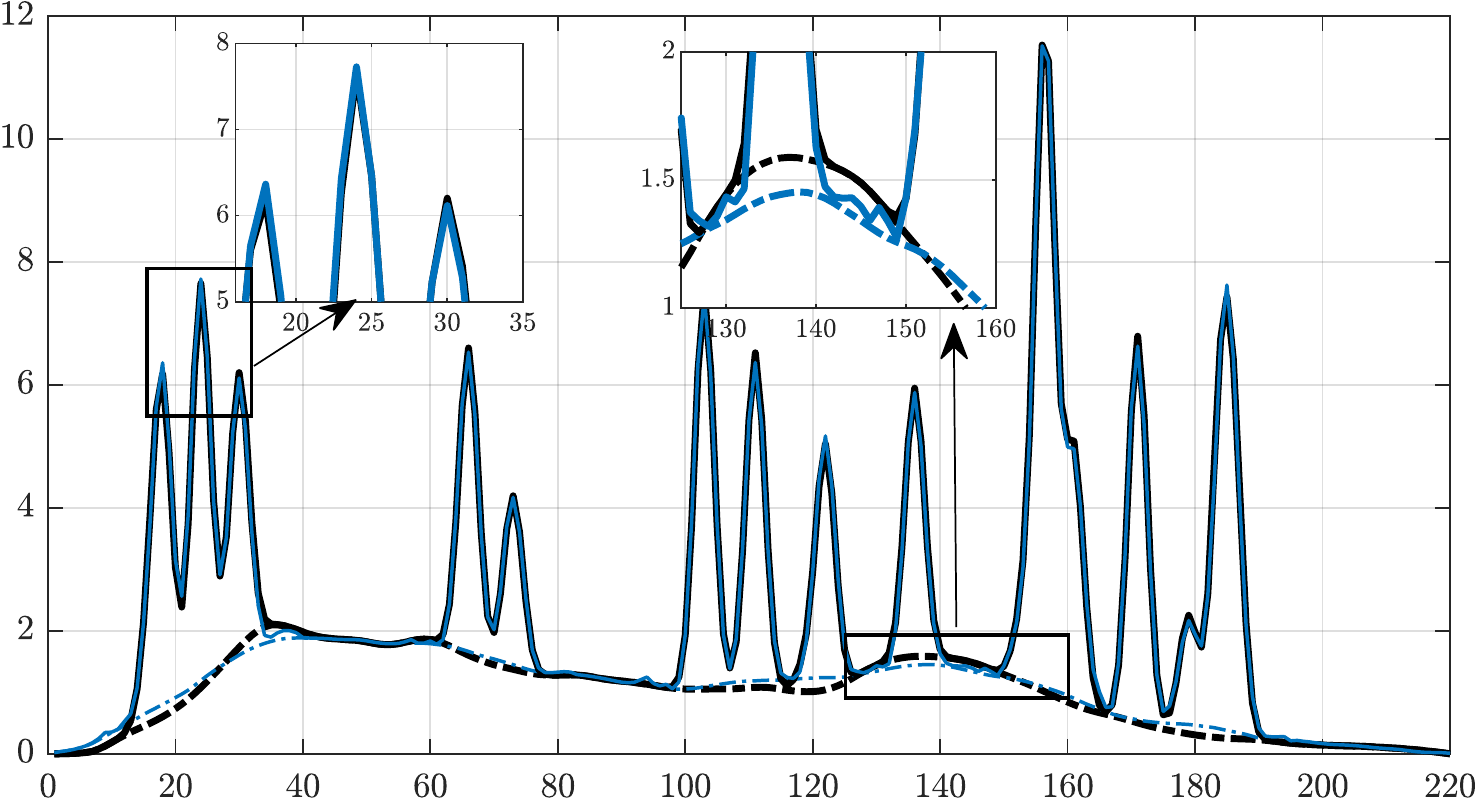}}
  \vspace{-.2cm}
\hfill 

    \caption{Ground truth (thick black line) and proposed estimation results (thin blue line) for the baseline~$\bt$ (dashed dot) and the signal~$\bs \ast \bpi$ (continuous).}
    \label{fig:estimation}
\end{figure}

\begin{figure}[htb]
    \centering
\subfloat[Dataset A - sparse spike  signal.\label{fig:dataA_sparse}]{
  \includegraphics[width=0.67\linewidth, trim= 0 0 0 30 ]{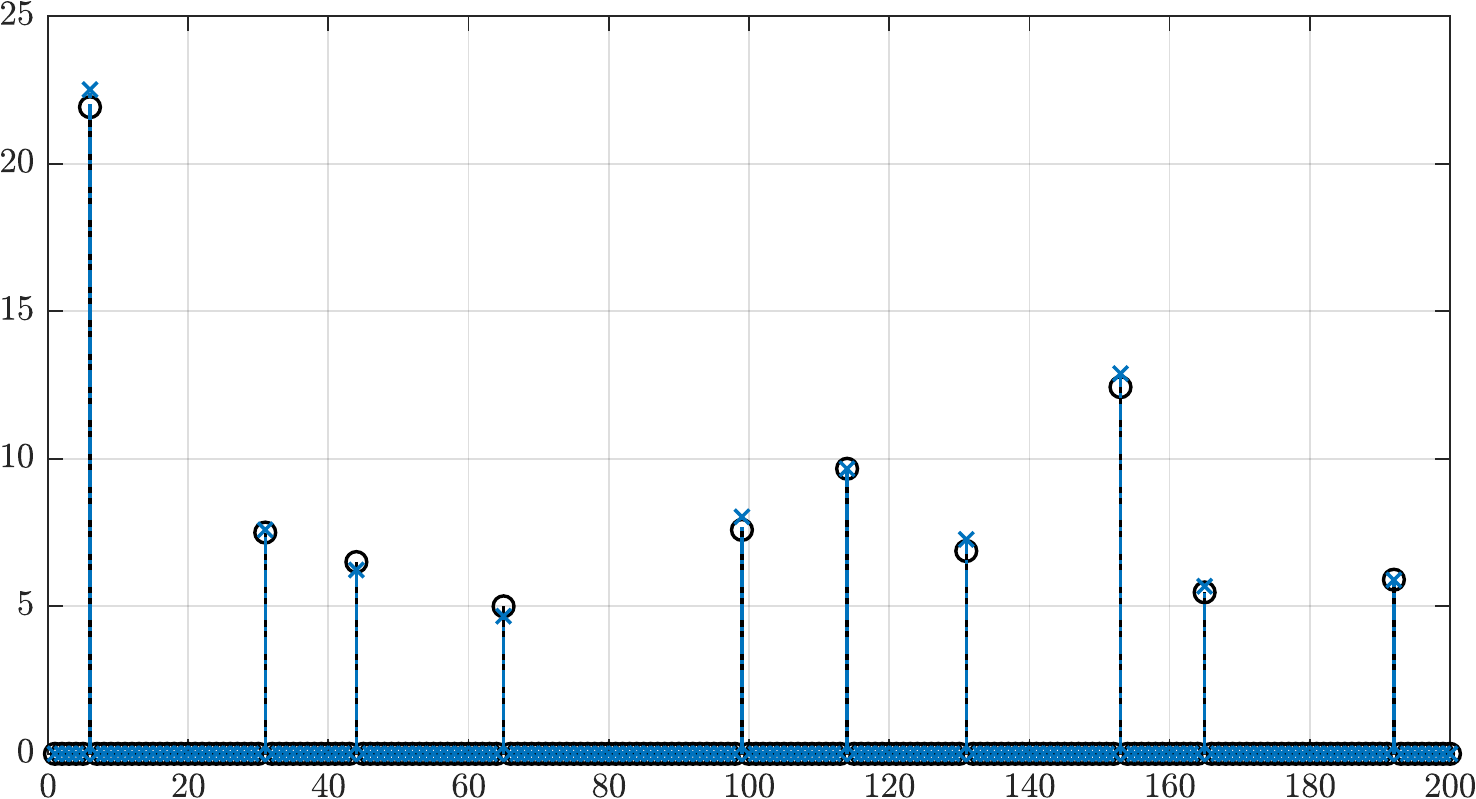}}
  \vspace{-.2cm}
\hfill 
\subfloat[Dataset B - sparse spike signal.\label{fig:dataB_sparse}]{
  \includegraphics[width=0.67\linewidth]{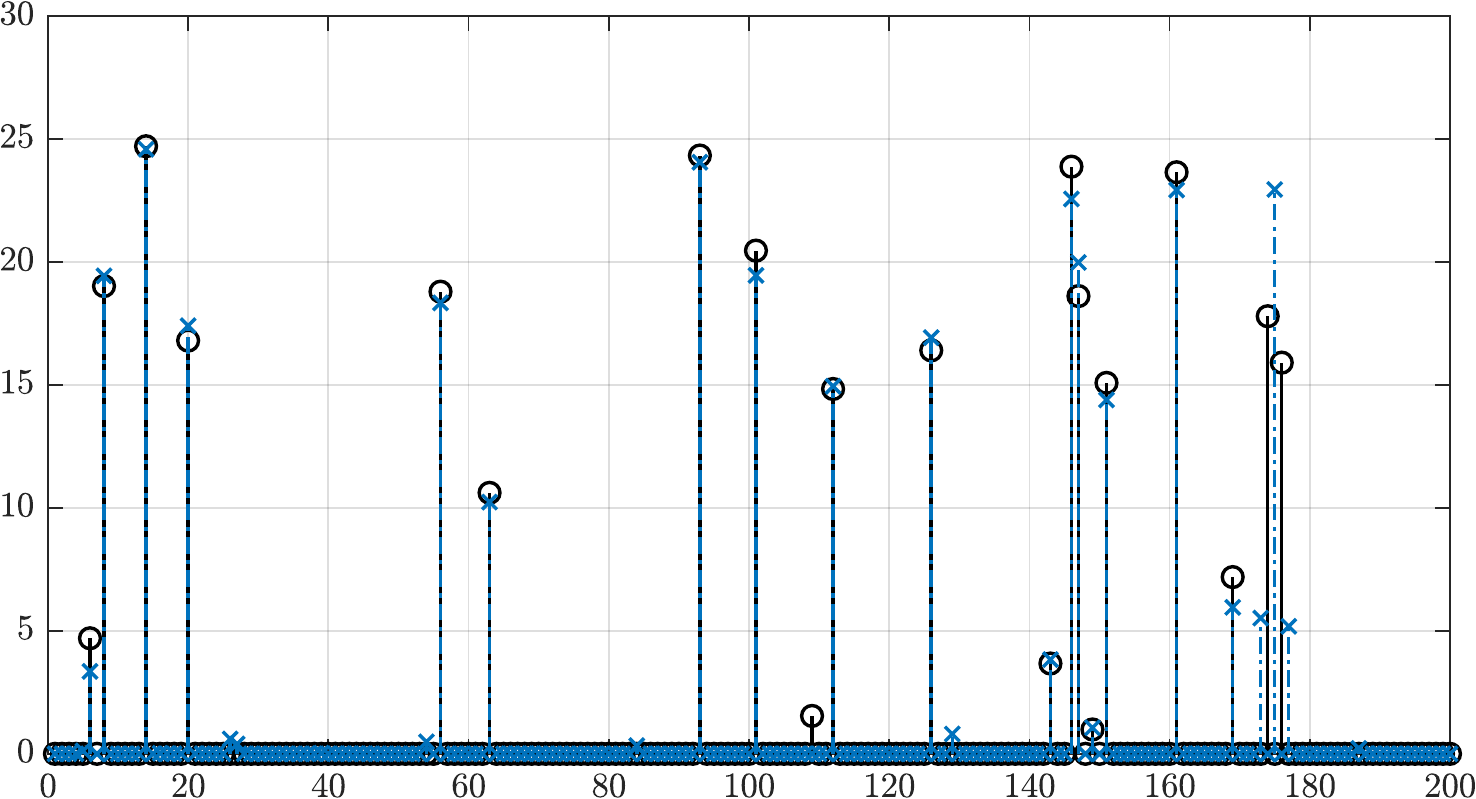}}
  \vspace{-.2cm}
\hfill 
    \caption{Ground truth (black line with circle marker) and proposed estimation results (blue line with cross marker) for sparse spike signal $\bs$.}
    \label{fig:estimation2}
    \vspace{-.5cm}
\end{figure}

\end{document}